\theoremstyle{remark}
\theoremstyle{plain}
\newtheorem{theorem}{Theorem}
\newtheorem{definition}{Definition}
\newtheorem{remark}{Remark}
\begin{document}

\title{\vspace{1cm}
 \textbf{Application of $p$-adic analysis methods in describing Markov processes on ultrametric spaces isometrically embeddable into $\mathbb{Q}_{p}$} }

\author{A.\,Kh.~Bikulov \medskip{}
\\
\textit{Institute of Chemical Physics, }
\\
\textit{Kosygina Street 4, 117734 Moscow, Russia} \medskip{}
\\
e-mail:\:\texttt{beecul@mail.ru}
\bigskip{}
 \\
 and
 \bigskip{}
 \\
 A.\,P.~Zubarev
 \medskip{}
 \\
\textit{ Physics Department, Samara State Aerospace University,  }
\\
\textit{ Moskovskoe shosse 34, 443123, Samara, Russia } \medskip{}
 \\
\textit{Physics and Chemistry Department, }
 \\
 \textit{Samara State University of Railway Transport,}
 \\
\textit{Perviy Bezimyaniy pereulok 18, 443066, Samara, Russia }\medskip{}
 \\
e-mail:\:\texttt{apzubarev@mail.ru}
}
\maketitle

\begin{abstract}
We propose a method for describing  stationary Markov processes on the class of
ultrametric spaces~$\mathbb{U}$ isometrically embeddable  in the field  $\mathbb{Q}_{p}$
of  $p$-adic numbers. This method is capable of reducing the study of such processes to the investigation of processes on~$\mathbb{Q}_{p}$.
Thereby the traditional machinery of $p$-adic mathematical physics can be applied to calculate the
characteristics of stationary Markov processes on such spaces.
The Cauchy problem for
the Kolmogorov--Feller equation of a~stationary Markov process on such spaces is shown as being reducible to
the Cauchy problem for a~pseudo-differential equation
on~$\mathbb{Q}_{p}$ with  non-translation-invariant measure $m\left(x\right)d_{p}x$.
The spectrum of the pseudo-differential operator of the Kolmogorov--Feller equation
on $\mathbb{Q}_{p}$ with measure $m\left(x\right)d_{p}x$ is found. Orthonormal basis of real valued functions for  $L^{2}\left(\mathbb{Q}_{p},m\left(x\right)d_{p}x\right)$
is constructed from the eigenfunctions of this operator.
\end{abstract}

\section{Introduction}
The last three decades have seen a~growing interest
to ultrametric models in various branches of physics, biology, economics, and social sciences:  spin glass models, biopolymer models, fractal
structures, optimization theory, taxonomy, evolutionary biology, cluster and factor analysis, etc.\ (see the surveys \cite{RTV},~\cite{DKKV}).
Dynamic probability models on ultrametric structures, and in particular, random walk models have been studied on general ultrametric spaces \cite{OS}, \cite{BH}, \cite{BH1}, \cite{Motyl},
\cite{KK},~\cite{Koz2}, on the field of $p$-adic numbers
\cite{V}, \cite{VVZ}, \cite{ABK}, \cite{ABKO}, \cite{ABO}, \cite{ABZ}, \cite{ABZ_motor}, \cite{ABZ_MIAN},  and on the ring of $m$-adic  numbers~\cite{DZ}.
Such models were shown as being directly related to the description of the dynamics of conformational rearrangements of protein molecules.
There is also a~ground to believe that such ultrametric models may be related to the description of relaxation processes
in complex socio--economic systems \cite{MS}, \cite{SJ},~\cite{BZK}.

Ultrametric modeling of similar systems resides in the description of the system dynamics as a random process on the ultrametric space
of system configurations, when the probability transitions  between configurations are defined by an ultrametric distance between them.
From the mathematical point of view, such a~dynamics is described by one equation or a~system of
 ``reaction--ultrametric diffusion''-type equations. Similar equations admitting exact analytic solutions on $p$-adic spaces
were considered in the papers \cite{VVZ}, \cite{ABK}, \cite{ABKO}, \cite{ABO} as
models of ultrametric random walk on protein energy landscapes,
models of the ligand rebinding kinetics of myoglobin, and models of prototypes of molecular nanomachines. A~``reaction--ultrametric diffusion''-type
equation on  the set of $p$-adic numbers~$\mathbb{Q}_{p}$ is as follows
\begin{equation}\label{p-RUD}
\frac{\partial f(x,t)}{dt}=-D^{\alpha}f\left(x,t\right)+\lambda V\left(x\right)f(x,t),
\end{equation}
where $D^{\alpha}$ is the Vladimirov pseudo-differential operator~\cite{V}, \cite{VVZ} and
$V\left(x\right)$ is some function on~$\mathbb{Q}_{p}$. In the papers
\cite{OS}, \cite{BH}, \cite{BH1}, the equations of ultrametric random
walk on general finite ultrametric spaces were studied and exact analytic solutions of a~Cauchy problem for a~certain class of initial conditions were found.
In~\cite{KK}, \cite{Koz2}, a~wavelet basis on a~general regular ultrametric space~$\mathbb{U}$ was described,
a~measure $\mu$ was defined for a~compact~$\mathbb{U}$, and the spectrum was found for the analogue of
the Vladimirov operator on~$\mathbb{U}$,
\begin{equation}
Tf\left(v\right)=\intop_{\mathbb{U}}d\mu_{\mathbb{U}}\left(u\right)\dfrac{1}{\left(d\left(u,v\right)\right)^{\alpha+1}}\left(f\left(v\right)-f\left(u\right)\right),\label{T}
\end{equation}
which acts on functions from $L^{2}\left(\mathbb{U},d\mu_{\mathbb{U}}\right)$.

The present paper is mainly concerned with the study of a pseudo-differential operator $W_{m\left(x\right)}$ of the form
\begin{equation}
W_{m\left(x\right)}f\left(x\right)=\intop_{\mathbb{Q}_{p}}d_{p}y
m\left(y\right)W\left(|x-y|_{p}\right)\left(f\left(y\right)-f\left(x\right)\right),\label{W}
\end{equation}
where $m\left(x\right)$ is some nonnegative function on~$\mathbb{Q}_{p}$ which is locally integrable with respect to the Haar measure $d_{p}x$.

There are at least two compelling reasons for such statement of the problem.

The first one stems from the need for the study of Markov random process
on inhomogeneous ultrametric spaces. On such spaces, for each ball the number of maximal subballs may be different. Inhomogeneous ultrametric spaces
correspond to boundary points of ultrametric trees with different branch numbers at each vertex.
Of even greater interest, however, is the class of problems involving random processes on ultrametric trees with random branching.
Since the problem of analytic description of general inhomogeneous ultrametric spaces is quite difficult to formalize,
it is quite natural to reduce the equations describing random processes on an arbitrary ultrametric space~$\mathbb{U}$ to equations
in the standard space~$\mathbb{Q}_{p}$. Such a~problem can be easily formalized for the so-called
ultrametric spaces that can be isometrically embedded into $\mathbb{Q}_{p}$; that is, for spaces~$\mathbb{U}$ which are
isometrically isomorphic to some subset  $M\subset\mathbb{Q}_{p}$ of nonzero measure. For such spaces one may define a~measure,
which is the natural restriction of the Haar measure on~$\mathbb{Q}_{p}$ to $M\subset\mathbb{Q}_{p}$.
Consequently, the description of a~Markov random process on
an ultrametric space~$\mathbb{U}$ isometrically embeddable into~$\mathbb{Q}_{p}$
is equivalent to the description of a~Markov random process on $M\subset\mathbb{Q}_{p}$.
For random processes on $M\subset\mathbb{Q}_{p}$, the solution $f\left(x,t\right)$
of the Cauchy problem for the Kolmogorov--Feller equation (or the master equation)~\cite{G}
\begin{equation}\label{K-F}
\frac{d\varphi(x,t)}{dt}=\intop_{M}d_{p}y
W\left(|x-y|_{p}\right)\left(\varphi\left(y,t\right)-\varphi\left(x,t\right)\right)
\end{equation}
for a stationary Markov process  on~$M$ can be recovered as
\begin{equation}
\varphi\left(x,t\right)=m\left(x\right)f\left(x,t\right),\label{sol_f}
\end{equation}
where $m\left(x\right)$ is the characteristic function of a~subset
$M\subset\mathbb{Q}_{p}$, provided that one knows the solution $f\left(x,t\right)$
of the Cauchy problem for the equation
\begin{equation}\label{V_Eq}
\dfrac{df\left(x,t\right)}{dt}=W_{m\left(x\right)}f\left(x,t\right)
\end{equation}
with operator~(\ref{W}) on~$\mathbb{Q}_{p}$.
It is worth pointing out that $\mathbb{Q}_{p}=M\cup\left(\mathbb{Q}_{p}\setminus M\right)$,
and hence the solution $f\left(x,t\right)$ of equation~(\ref{V_Eq}) can be written as the sum of two functions
$f\left(x,t\right)=\varphi\left(x,t\right)+\phi\left(x,t\right)$, whose supports lie, respectively, in~$M$ and $\mathbb{Q}_{p}\setminus M$.
The function $\phi\left(x\right)$, which gives no contribution to the solution~(\ref{sol_f}),
will always be present in the solution $f\left(x,t\right)$.
This seems to be the ``price'' paid for the possibility of reduction of the Cauchy problem of equation~(\ref{K-F}) on~$\mathbb{U}$ to the analogous
equation (\ref{V_Eq}) on~$\mathbb{Q}_{p}$, which can be solved by the standard machinery of
the $p$-adic mathematical physics.

The second reason for the study of operator~(\ref{W}) is dictated by the need to study the solutions of the equation of ultrametric random walk on
$\mathbb{Q}_{p}$ with potential $U\left(y\right)$. The most natural form of such an equation is as follows:
\begin{equation}
\frac{\partial f(x,t)}{dt}=\intop_{\mathbb{Q}_{p}}d_{p}yW\left(|x-y|_{p}\right)\left(U\left(y\right)f\left(y,t\right)-U\left(x\right)f\left(x,t\right)\right).\label{UMD_U}
\end{equation}
So far, there is no known exact analytic solution of equation~(\ref{UMD_U}) with some nontrivial potential $U\left(y\right)$.
Nonetheless, it is of great value that equation~(\ref{UMD_U}) can be reduced to a~``reaction--ultrametric diffusion''-type equation
with operator~(\ref{W}) and measure $m\left(x\right)d_{p}x=U\left(x\right)d_{p}x$,
\begin{equation}
\frac{\partial f(x,t)}{dt}=W_{U\left(x\right)}f\left(x,t\right)+V\left(x\right)f\left(x,t\right),\label{UMD_m}
\end{equation}
where
\[
V\left(x\right)=\intop_{\mathbb{Q}_{p}}d_{p}yW\left(|x-y|_{p}\right)\left(U\left(y\right)-U\left(x\right)\right).
\]
Thus, studying the properties and the spectrum of operator~(\ref{W}) with various measures $m\left(x\right)d_{p}x$
provides the possibility of searching a~method for solution of equations of type~(\ref{UMD_U}).

The present paper is organized as follows. In Section~2 we define the class of regular
ultrametric spaces isometrically embeddable into $\mathbb{Q}_{p}$  and reduce the problem of solving the Kolmogorov--Feller equation~(\ref{K-F}) for a~stationary Markov process
to the problem of solving equation~(\ref{V_Eq}). In Section~3 we find the spectrum of operator~(\ref{W})
in the space $L^{2}\left(\mathbb{Q}_{p},m\left(x\right)d_{p}x\right)$.
The results obtained in this section are the natural generalizations of the results of \cite{BH},~\cite{BH1}. It is also worth mentioning
that there are differences between our results and the results of \cite{KK}, \cite{Koz2} for the spectrum of an~analogue of the Vladimirov operator on~$\mathbb{U}$.
This is explained by the substantial difference in the definition of measure $\mu_{\mathbb{U}}\left(u\right)$ on~$\mathbb{U}$. In Section~4
we explicitly construct a~orthonormal basis of real valued functions for  $L^{2}\left(\mathbb{Q}_{p},m\left(x\right)d_{p}x\right)$
from the eigenfunctions of operator~(\ref{W}).

\section{Ultrametric spaces isometrically embeddable into $\mathbb{Q}_{p}$}

Let $\mathbb{U}$ be an arbitrary ultrametric space of elements $\left\{ u\right\} $ with ultrametric $\delta\left(u,u'\right)$
$\forall u,u'\in\mathbb{U}$. By a~ball in~$\mathbb{U}$ of radius~$r$ and centre
at a~point $v\in\mathbb{U}$ we shall mean the set $\mathcal{B}_{r}\left(v\right)=\left\{ u\in\mathbb{U}:\:\delta\left(u,v\right)\leq r\right\} $.

\begin{definition} Following~\cite{Koz2}, an ultrametric space~$\mathbb{U}$ is called \textit{regular} if it is complete and has the following properties:

1) the set of all balls in $\mathbb{U}$ of nonzero radius is at most countable;

2) for any infinite nested sequence of balls
$\left\{ \mathcal{B}_{r_{i}}\right\} $, $\mathcal{B}_{r_{i+1}}\subset\mathcal{B}_{r_{i}}$
the radii $r_{i}$ of the balls tend to zero as $i\rightarrow\infty$;

3) each ball of nonzero radius is a~finite union of maximal subballs.
\label{def1}
\end{definition}
A wavelet basis $\left\{ \psi_{Ij}\right\} $ on a~regular
ultrametric space~$\mathbb{U}$ was described in~\cite{KK}, \cite{Koz2}. It was also shown that
$\psi_{Ij}$ is an eigenvector of an analogue of the Vladimirov operator~(\ref{T}) on~$\mathbb{U}$ with eigenvalues
\begin{equation}
\lambda_{I}=\nu^{-\alpha}\left(I\right)+\sum_{J>I}\nu^{-\alpha}\left(J\right)\left(1-p_{J}^{-1}\right),\label{lambda(I)}
\end{equation}
where $\nu\left(I\right)$ is the measure of a~ball~$I$ and $p_{J}$  is the number of maximal subballs in a~ball~$J$. In~(\ref{lambda(I)}) index~$I$ runs over
all possible balls in~$\mathbb{U}$, and $J$~varies over all orthogonal functions from the space $V_{0}\left(I\right)\subset V(I)$. Note that
$V(I)$ is the space of functions on~$\mathbb{U}$, this space is generated by the characteristic
functions of maximal subbals in a~nonzero ball~$I$. Also, $V_{0}\left(I\right)$ is the subspace
of~$V(I)$ consisting of functions with zero mean with respect to the measure $d\mu_{\mathbb{U}}\left(u\right)$.
The measure $\mu_{\mathbb{U}}\left(u\right)$ on~$\mathbb{U}$ is generated by the $\sigma$-algebra of all possible balls in~$\mathbb{U}$.
In~\cite{KK}, \cite{Koz2}, the measure of each ball was defined to be equal to its radius. With such definition of a~measure all maximal subballs
of any ball will have the same ``volume'', despite the fact that they, in turn, may contain a~different number of subballs of the same
``volume''. From our point of view, such definition of a~measure on an arbitrary
regular ultrametric space is not very convenient for physical applications, but as yet no other method for constructing a~measure on
arbitrary regular ultrametric spaces is available. Nevertheless, there is a~fairly broad class of ultrametric spaces, which we call ultrametric spaces isometrically embeddable
into $\mathbb{Q}_{p}$. A~measure on such a~space is the natural restriction of the Haar measure on~$\mathbb{Q}_{p}$.

Let $\mathbb{Q}_{p}$ be the field of $p$-addic numbers and $M$ be some measurable
subset of~$\mathbb{Q}_{p}$. Then $M$~is an ultrametric space
with ultrametric  induced by the ultrametric on~$\mathbb{Q}_{p}$
and with measure $d\mu_{M}\left(x\right)$, which is the restriction of the Haar measure
$d_{p}x$ on~$\mathbb{Q}_{p}$. We let $m\left(x\right)$ denote
the characteristic  function of a~subset~$M$. Then $m\left(x\right)$
is measurable and $d\mu_{M}\left(x\right)=m\left(x\right)d_{p}x$.

Assume that $\mathbb{U}$ is regular. Then the possible values $\delta_{i}$
of the ultrametric $\delta\left(u,u'\right)$ form a~countable increasing
sequence $\left\{ \delta_{i}\right\} $, $\delta_{i}<\delta_{i+1}$
$i\in\mathbb{Z}$. Besides, for any prime number~$p$, the function
\begin{equation}
\label{gen_umetr}
d\left(u,u'\right)=p^{i}\quad \text{for}\ \ \delta\left(u,u'\right)=\delta_{i}
\end{equation}
is also an ultrametric on the space~$\mathbb{U}$, which is equivalent to the ultrametric  $\delta\left(u,u'\right)$.
In what follows, the set $\mathcal{B}_{i}\left(v\right)=\left\{ u\in\mathbb{U}:\: d\left(u,v\right)\leq p^{i}\right\} $
will be referred to as a~ball in~$\mathbb{U}$ of radius $p^{i}$ with centre at a~point~$v$.

\begin{definition} A~regular ultrametric space~$\mathbb{U}$ is \textit{isometrically embeddable into} $\mathbb{Q}_{p}$ if there exists a~subset
$M\subset\mathbb{Q}_{p}$ such that $\mathbb{U}$ is isometrically isomorphic to~$M$.
\label{def2}
\end{definition}

Note that instead of the field of $p$-adic numbers $\mathbb{Q}_{p}$ one may consider the
pseudo-normed ring of $m$-adic numbers $\mathbb{Q}_{m}$ (see~\cite{DZ}). In this case, $p$~is assumed to be an arbitrary natural number~$m$, which is not necessarily prime.

Is is clear that the description of a~Markov random process on an ultrametric space~$\mathbb{U}$ isometrically embeddable into~$\mathbb{Q}_{p}$
is equivalent to the description of a~Markov process on $M\subset\mathbb{Q}_{p}$.
Let us outline the procedure of reduction of the Kolmogorov--Feller equation~(\ref{K-F}) for such a~process to the Kolmogorov--Feller equation with
modified measure and~(\ref{V_Eq}) in~$\mathbb{Q}_{p}$.

Consider a stationary Markov process  on~$M$ (an ultrametric random walk) described by equation~(\ref{K-F}).
Here, $W\left(d\left(x,y\right)\right)=\lim_{t'\rightarrow t}\dfrac{d}{dt'}p\left(y,t'|x,t\right)$, where $p\left(y,t'|x,t\right)=p\left(d\left(x,y\right),t'-t\right)$
is the transition function of such a~process. Instead of equation~(\ref{K-F}), we may
consider an equation of form~(\ref{V_Eq}). Since $m\left(x\right)$ is
the characteristic function of a~set~$M$, we may denote
\[
f\left(x,t\right)=\varphi\left(x,t\right)+\phi\left(x,t\right),\;\varphi\left(x,t\right)=m\left(x\right)f\left(x,t\right),\:\phi\left(x,t\right)=\left(1-m\left(x\right)\right)f\left(x,t\right).
\]
Hence, for the functions $\varphi\left(x,t\right)$ and $\phi\left(x,t\right)$, we have
\begin{equation}
\dfrac{d\varphi\left(x,t\right)}{dt}=\intop_{M}d_{p}y W\left(|x-y|_{p}\right)\left(\varphi\left(y,t\right)-\varphi\left(x,t\right)\right)
\,\, {\rm for} \,\,x \in M,
\label{eq_varphi}
\end{equation}
\begin{equation}
\dfrac{d\phi\left(x,t\right)}{dt}=\intop_{M}d_{p}y
W\left(|x-y|_{p}\right)\left(\varphi\left(y,t\right)-\phi\left(x,t\right)\right)
\,\, {\rm for} \,\,x \in \mathbb{Q}_{p} \backslash M.
\label{eq_phi}
\end{equation}
The equation (\ref{eq_varphi}) coincides with equation (\ref{K-F}). Assume that the initial condition
is supported in~ $M$. Then $f\left(x,0\right)=\varphi\left(x,0\right)$,
and hence, equation (\ref{eq_varphi}) is independent of equation~(\ref{eq_phi}) and admits a~unique solution of the Cauchy problem
with initial condition $\varphi\left(x,0\right)$. Equation (\ref{eq_varphi}) is a~bit more difficult for solution than equation~(\ref{V_Eq}).
Hence, instead of finding a~solution $\varphi\left(x,t\right)$ of equation (\ref{eq_varphi}) it is easier to find a~solution $f\left(x,t\right)$ of equation~(\ref{V_Eq}).
In this case,
\begin{equation}
\varphi\left(x,t\right)=m\left(x\right)f\left(x,t\right).\label{sol-f}
\end{equation}
Clearly, $\phi\left(x\right)$ gives no contribution to solution~(\ref{sol-f}), but this part of the function will always be present in the
solution $f\left(x,t\right)$. This is a~particular feature of the reduction of the Cauchy problem of equation~(\ref{K-F}) on~$\mathbb{U}$ to the similar problem
for equation (\ref{V_Eq}) on~$\mathbb{Q}_{p}$ and the solution thereof with application of the standard methods of the $p$-adic mathematical physics.

\section{The spectrum of the operator $W_{m\left(x\right)}$ on $L^{2}\left(\mathbb{Q}_{p},m\left(x\right)d_{p}x\right)$}

We shall be concerned with operator~(\ref{W}) in $L^{2}\left(\mathbb{Q}_{p}\text{,}\: m\left(x\right)d_{p}x\right)$
\begin{equation}
W_{m\left(x\right)}f\left(x\right)=\intop_{\mathbb{Q}_{p}}d_{p}ym\left(y\right)W\left(|x-y|_{p}\right)\left(f\left(y\right)-f\left(x\right)\right).\label{Vlad_Q_p_gen}
\end{equation}
For our purposes described in the introduction we shall assume that
$m\left(x\right)$ is some function on $\mathbb{Q}_{p}$ which is locally integrable with respect to measure
$d_{p}x$, which assumes the values on $\mathbb{R}_{+}$ and not bounded by the values $0$ and~$1$. We also impose the following condition
on $m\left(x\right)$:
\begin{equation}
\exists\beta>1:\;\lim_{i\rightarrow\infty}\dfrac{i^{\beta}}{\intop_{\mathbb{Q}_{p}}m\left(x\right)d_{p}x\varOmega\left(|x|_{p}p^{-i}\right)}=0.\label{restr}
\end{equation}
The meaning of this condition becomes clear below. We set
\[
V_{i}\left(x\right)=\intop_{\mathbb{Q}_{p}}m\left(y\right)d_{p}y\varOmega\left(|y-x|_{p}p^{-i}\right).
\]
By (\ref{restr})
\begin{equation}
\exists\beta>1:\quad \lim_{i\rightarrow\infty}\dfrac{i^{\beta}}{V_{i}\left(x\right)}=0.\label{restr1}
\end{equation}

\begin{theorem}The functions
\begin{equation}
f_{\gamma,n,a}\left(x\right)=\varOmega\left(|x-np^{-\gamma}-ap^{-\gamma}|_{p}p^{-\gamma+1}\right)-
\dfrac{V_{\gamma-1}\left(np^{-\gamma}+ap^{-\gamma}\right)}{V_{\gamma}\left(np^{-\gamma}\right)}\varOmega\left(|x-np^{-\gamma}|_{p}p^{-\gamma}\right)\label{f_gna},
\end{equation}
where $\gamma\in\mathbb{Z}$, $n\in\mathbb{Q}_{p}/Z_{p}$, $a=0,1,\ldots,p-1$   are such that $V_{\gamma-1}\left(np^{-\gamma}+ap^{-\gamma}\right)\neq0$,
are eigenfunctions of operator~\eqref{Vlad_Q_p_gen} with eigenvalues
\begin{equation}
\lambda_{\gamma,n}=-\sum_{i=\gamma}^{\infty}\left(W\left(p^{i}\right)-W\left(p^{i+1}\right)\right)V_{i}\left(np^{-\gamma}\right).\label{lambda_gn}
\end{equation}
\label{th1}
\end{theorem}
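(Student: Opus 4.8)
The plan is to verify directly that $W_{m(x)}f_{\gamma,n,a}=\lambda_{\gamma,n}f_{\gamma,n,a}$ by exploiting the ultrametric structure of $\mathbb{Q}_p$, which makes the kernel $W(|x-y|_p)$ locally constant on the annuli around any fixed point. First I would fix $\gamma$, $n$, $a$ and abbreviate $c=np^{-\gamma}$, $c'=np^{-\gamma}+ap^{-\gamma}$; note that both characteristic functions appearing in $f_{\gamma,n,a}$ are supported in the ball $B_\gamma(c)=\{x:|x-c|_p\le p^\gamma\}$, the smaller one $\Omega(|x-c'|_pp^{-\gamma+1})$ being supported in the maximal subball $B_{\gamma-1}(c')\subset B_\gamma(c)$. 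The key preliminary observation is that $f_{\gamma,n,a}$ has vanishing integral against $m(y)\,d_py$: indeed $\int m(y)\Omega(|y-c'|_pp^{-\gamma+1})\,d_py=V_{\gamma-1}(c')$ and $\int m(y)\Omega(|y-c|_pp^{-\gamma})\,d_py=V_\gamma(c)$, so the coefficient in \eqref{f_gna} is chosen precisely so that $\int_{\mathbb{Q}_p}m(y)f_{\gamma,n,a}(y)\,d_py=0$. This is the analogue of the zero-mean property of $p$-adic wavelets.

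Next I would split the action of the operator according to the location of $x$. The term $-f(x)\int m(y)W(|x-y|_p)\,d_py$ is handled by writing $\int m(y)W(|x-y|_p)\,d_py$ as a telescoping sum over annuli $p^{i-1}<|x-y|_p\le p^i$: on such an annulus $W$ is constant equal to $W(p^i)$, and the $m$-mass of $\{y:|x-y|_p\le p^i\}$ is $V_i(x)$, giving $\int m(y)W(|x-y|_p)\,d_py=\sum_{i}(W(p^i)-W(p^{i+1}))V_i(x)$ (here condition \eqref{restr1} guarantees absolute convergence, which is "the meaning of this condition"). For the gain term $\int m(y)W(|x-y|_p)f(y)\,d_py$ I would use that $f$ is supported in $B_\gamma(c)$ together with the ultrametric fact that for $|x-c|_p>p^\gamma$ one has $|x-y|_p=|x-c|_p$ for all $y\in B_\gamma(c)$, so $W(|x-y|_p)$ pulls out of the integral and the gain term becomes $W(|x-c|_p)\cdot 0=0$ by the zero-mean property; hence for $x\notin B_\gamma(c)$ we get $W_{m(x)}f(x)=0=\lambda_{\gamma,n}\cdot 0$, consistent with $f(x)=0$ there.

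The substantive case is $x\in B_\gamma(c)$, which breaks into two sub-cases: $x\in B_{\gamma-1}(c')$ (where $f(x)=1-V_{\gamma-1}(c')/V_\gamma(c)$) and $x\in B_\gamma(c)\setminus B_{\gamma-1}(c')$ (where $f(x)=-V_{\gamma-1}(c')/V_\gamma(c)$). In each sub-case the gain integral splits over annuli centered at $x$; for annuli of radius $p^i$ with $i\ge\gamma$ the ball $\{y:|x-y|_p\le p^i\}$ contains all of $B_\gamma(c)$, so by the zero-mean property the contribution of $f$ over the full ball cancels and only "boundary" annuli at scales $i=\gamma-1$ and $i=\gamma$ survive; a short computation using $V_{\gamma-1}(c')$, $V_\gamma(c)$ and the constancy of $W$ on each annulus should collapse everything to $\lambda_{\gamma,n}f_{\gamma,n,a}(x)$ with $\lambda_{\gamma,n}$ as in \eqref{lambda_gn}. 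I expect the main obstacle to be the careful bookkeeping in this last step: one must correctly identify which annuli around an $x$ in each sub-ball intersect which pieces of the support of $f$, use ultrametricity ($|x-y|_p\le\max(|x-c'|_p,|c'-y|_p)$ etc.) to pin down the radii exactly, and check that the two sub-cases yield the same eigenvalue — the cancellation that forces this is exactly the zero-mean normalization, so the argument is tight but requires attention. Convergence of the series \eqref{lambda_gn} (and hence well-definedness of $\lambda_{\gamma,n}$) follows from \eqref{restr1} provided $W(p^i)-W(p^{i+1})$ does not grow too fast; I would state the needed mild hypothesis on $W$ explicitly if it is not already implicit in the setup.
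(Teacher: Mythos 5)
Your overall strategy --- verifying $W_{m(x)}f_{\gamma,n,a}=\lambda_{\gamma,n}f_{\gamma,n,a}$ directly from ultrametricity together with the mean-zero normalization $\intop_{\mathbb{Q}_{p}}m(y)f_{\gamma,n,a}(y)\,d_{p}y=0$ --- is viable and genuinely different from the paper's route. The paper first computes the action of the operator on the single indicator $\varOmega(|x-np^{-\gamma}-ap^{-\gamma}|_{p}p^{-\gamma+1})$, then forms the antisymmetric two-ball combinations $g_{\gamma,n,a,b}$ of \eqref{g_g_n_a_b}, shows \emph{those} are eigenfunctions, and finally recovers $f_{\gamma,n,a}$ as $V_{\gamma}(np^{-\gamma})^{-1}\sum_{b}g_{\gamma,n,a,b}$. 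Your direct verification, once carried out, is shorter and makes the role of the zero-mean condition more transparent; I checked that the two sub-cases $x\in B_{\gamma-1}(np^{-\gamma}+ap^{-\gamma})$ and $x\in B_{\gamma}(np^{-\gamma})\setminus B_{\gamma-1}(np^{-\gamma}+ap^{-\gamma})$ do both produce the eigenvalue \eqref{lambda_gn}.

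There are, however, concrete problems with how you propose to organize the computation. First, you split $W_{m(x)}f(x)$ into a gain term $\intop m(y)W(|x-y|_{p})f(y)\,d_{p}y$ and a loss term $-f(x)\intop m(y)W(|x-y|_{p})\,d_{p}y$, asserting that the latter integral equals $\sum_{i}(W(p^{i})-W(p^{i+1}))V_{i}(x)$ and converges absolutely by \eqref{restr1}. For the kernels this paper is built around (e.g.\ the Vladimirov kernel $W(|x|_{p})=|x|_{p}^{-\alpha-1}$) that integral diverges: the annulus $|x-y|_{p}=p^{i}$ contributes on the order of $p^{-i\alpha}$ as $i\to-\infty$, and the gain term diverges for the same reason wherever $f(x)\neq0$. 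Only the combined integrand $m(y)W(|x-y|_{p})(f(y)-f(x))$ is integrable, because $f_{\gamma,n,a}$ is locally constant at scale $p^{\gamma-1}$ so the integrand vanishes on $|x-y|_{p}\leq p^{\gamma-1}$; you must restrict to $|x-y|_{p}\geq p^{\gamma}$ \emph{before} splitting, which is exactly what the factor $1-\varOmega(\cdot)$ accomplishes in the paper's second case. Second, your reading of \eqref{restr1} is off: it is a \emph{lower} bound on the growth of $V_{i}(x)$ (it forces $V_{i}(x)$ to grow faster than $i^{\beta}$ with $\beta>1$), which if anything makes series such as \eqref{lambda_gn} harder to converge; its actual role is to guarantee $\sum_{i}V_{\gamma+i-1}^{-1}<\infty$ in the completeness argument of Theorem~\ref{th2}. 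Third, your claim that only the boundary annuli at scales $i=\gamma-1,\gamma$ survive cannot be right as stated: for $x$ in the support of $f_{\gamma,n,a}$ every annulus $|x-y|_{p}=p^{i}$ with $i\geq\gamma+1$ contributes (there $f(y)=0\neq f(x)$), and it is precisely these tails that produce the infinite sum over $i\geq\gamma$ in \eqref{lambda_gn}. Your closing observation that a decay hypothesis on $W(p^{i})-W(p^{i+1})$ relative to $V_{i}$ is needed for \eqref{lambda_gn} to be finite is correct and is indeed left implicit in the paper.
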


\begin{proof}
We find the effect of operator~(\ref{Vlad_Q_p_gen}) on the characteristic
function $\varOmega\left(|x-np^{-\gamma}-ap^{-\gamma}|_{p}p^{-\gamma+1}\right)$
of the ball $B_{\gamma-1}\left(np^{-\gamma}+ap^{-\gamma}\right)$. If $|x-np^{-\gamma}-ap^{-\gamma}|_{p}>p^{\gamma-1}$,
then
\[
W_{m\left(x\right)}\varOmega\left(|x-np^{-\gamma}-ap^{-\gamma}|_{p}p^{-\gamma+1}\right)=
\]
\[
=\intop_{\mathbb{Q}_{p}}m\left(z+x\right)W\left(|z|_{p}\right)\varOmega\left(|z+x-np^{-\gamma}-ap^{-\gamma}|_{p}p^{-\gamma+1}\right)d_{p}z=
\]
\[
=W\left(|x-np^{-\gamma}-ap^{-\gamma}|_{p}\right)\intop_{\mathbb{Q}_{p}}m\left(z\right)\varOmega\left(|z-np^{-\gamma}-ap^{-\gamma}|_{p}p^{-\gamma+1}\right)d_{p}z=
\]
\[
=W\left(|x-np^{-\gamma}-ap^{-\gamma}|_{p}\right)V_{\gamma-1}\left(np^{-\gamma}+ap^{-\gamma}\right)
\]
Next, if $|x-np^{-\gamma}-ap^{-\gamma}|_{p}\leq p^{\gamma}$, then
\[
W_{m\left(x\right)}\varOmega\left(|y-np^{-\gamma}-ap^{-\gamma}|_{p}p^{-\gamma+1}\right)=
\]
\[
=-\intop_{\mathbb{Q}_{p}}m\left(z\right)\left(1-\varOmega\left(|z-np^{-\gamma}-ap^{-\gamma}|_{p}p^{-\gamma+1}\right)\right)W\left(|z-x|_{p}\right)d_{p}z=
\]
\[
=-\intop_{\mathbb{Q}_{p}}m\left(z+np^{-\gamma}+ap^{-\gamma}\right)\left(1-\varOmega\left(|z|_{p}p^{-\gamma+1}\right)\right)W\left(|z|_{p}\right)d_{p}z=
\]
\[
=-\sum_{i=\gamma}^{\infty}W\left(p^{i}\right)\intop_{\mathbb{Q}_{p}}m\left(z+np^{-\gamma}+ap^{-\gamma}\right)\left[\varOmega\left(|z|_{p}p^{-i}\right)-\varOmega\left(|z|_{p}p^{-i+1}\right)\right]d_{p}z=
\]
\[
=-\sum_{i=\gamma}^{\infty}W\left(p^{i}\right)\intop_{\mathbb{Q}_{p}}m\left(z\right)\left[\varOmega\left(|z-np^{-\gamma}-ap^{-\gamma}|_{p}p^{-i}\right)-\varOmega\left(|z-np^{-\gamma}-ap^{-\gamma}|_{p}p^{-i+1}\right)\right]d_{p}z=
\]
\[
=-\left(\sum_{i=\gamma}^{\infty}W\left(p^{i}\right)V_{i}\left(np^{-\gamma}\right)-\sum_{i=\gamma+1}^{\infty}W\left(p^{i}\right)V_{i-1}\left(np^{-\gamma}\right)\right)+
\]
\[
+V_{\gamma-1}\left(np^{-\gamma}+ap^{-\gamma}\right)W\left(p^{\gamma}\right)\varOmega\left(|y-np^{-\gamma}-ap^{-\gamma}|_{p}p^{-\gamma+1}\right).
\]
Consequently,
\[
W_{m\left(x\right)}\varOmega\left(|y-np^{-\gamma}-ap^{-\gamma}|_{p}p^{-\gamma+1}\right)=
\]
\[
=W\left(|x-np^{-\gamma}-ap^{-\gamma}|_{p}\right)V_{\gamma-1}\left(np^{-\gamma}+ap^{-\gamma}\right)\left[1-\varOmega\left(|y-np^{-\gamma}-ap^{-\gamma}|_{p}p^{-\gamma+1}\right)\right]-
\]
\[
-\left(\sum_{i=\gamma}^{\infty}W\left(p^{i}\right)V_{i}\left(np^{-\gamma}\right)-\sum_{i=\gamma+1}^{\infty}W\left(p^{i}\right)V_{i-1}\left(np^{-\gamma}\right)\right)\varOmega\left(|y-np^{-\gamma}-ap^{-\gamma}|_{p}p^{-\gamma+1}\right)+
\]
\begin{equation}
+V_{\gamma-1}\left(np^{-\gamma}+ap^{-\gamma}\right)W\left(p^{\gamma}\right)\varOmega\left(|y-np^{-\gamma}-ap^{-\gamma}|_{p}p^{-\gamma+1}\right)\label{D_Omega}
\end{equation}
Next, we introduce the function
\begin{equation} \label{g_g_n_a_b}
\begin{gathered}
g_{\gamma,n,a,b}\left(x\right)=V_{\gamma-1}\left(np^{-\gamma}+bp^{-\gamma}\right)\varOmega\left(|x-np^{-\gamma}-ap^{-\gamma}|_{p}p^{-\gamma+1}\right)-\\
-V_{\gamma-1}\left(np^{-\gamma}+ap^{-\gamma}\right)\varOmega\left(|x-np^{-\gamma}-bp^{-\gamma}|_{p}p^{-\gamma+1}\right)
\end{gathered}
\end{equation}
where $\gamma\in\mathbb{Z}$, $n\in\mathbb{Q}_{p}/Z_{p}$ and $a,b=0,1,\ldots,p-1$,
and act by operator~(\ref{Vlad_Q_p_gen}) on~(\ref{g_g_n_a_b}),
using~(\ref{D_Omega}). So, if  $|x-np^{-\gamma}-ap^{-\gamma}|_{p}>p^{\gamma-1}$
and $|x-np^{-\gamma}-bp^{-\gamma}|_{p}>p^{\gamma-1}$, then
\[
W_{m\left(x\right)}\left[V_{\gamma-1}\left(np^{-\gamma}+bp^{-\gamma}\right)\varOmega\left(|x-np^{-\gamma}-ap^{-\gamma}|_{p}p^{-\gamma+1}\right)-\right.
\]
\[
-\left.V_{\gamma-1}\left(np^{-\gamma}+ap^{-\gamma}\right)\varOmega\left(|x-np^{-\gamma}-bp^{-\gamma}|_{p}p^{-\gamma+1}\right)\right]=0.
\]
Next, if $|x-np^{-\gamma}-ap^{-\gamma}|_{p}\leq p^{\gamma-1}$, then
\[
W_{m\left(x\right)}\left[V_{\gamma-1}\left(np^{-\gamma}+bp^{-\gamma}\right)\varOmega\left(|x-np^{-\gamma}-ap^{-\gamma}|_{p}p^{-\gamma+1}\right)-\right.
\]
\[
-\left.V_{\gamma-1}\left(np^{-\gamma}+ap^{-\gamma}\right)\varOmega\left(|x-np^{-\gamma}-bp^{-\gamma}|_{p}p^{-\gamma+1}\right)\right]=
\]
\[
=-V_{\gamma-1}\left(np^{-\gamma}+bp^{-\gamma}\right)\left(\sum_{i=\gamma}^{\infty}W\left(p^{i}\right)V_{i}\left(np^{-\gamma}\right)-\sum_{i=\gamma+1}^{\infty}W\left(p^{i}\right)V_{i-1}\left(np^{-\gamma}\right)\right)+
\]
\[
+V_{\gamma-1}\left(np^{-\gamma}+ap^{-\gamma}\right)V_{\gamma-1}\left(np^{-\gamma}+bp^{-\gamma}\right)W\left(p^{\gamma}\right)V_{\gamma-1}\left(np^{-\gamma}+ap^{-\gamma}\right)-
\]
\[
-V_{\gamma-1}\left(np^{-\gamma}+ap^{-\gamma}\right)V_{\gamma-1}\left(np^{-\gamma}+bp^{-\gamma}\right)W\left(|x-np^{-\gamma}-bp^{-\gamma}|_{p}\right)=
\]
\[
=-V_{\gamma-1}\left(np^{-\gamma}+bp^{-\gamma}\right)\left(\sum_{i=\gamma}^{\infty}W\left(p^{i}\right)V_{i}\left(np^{-\gamma}\right)-\sum_{i=\gamma+1}^{\infty}W\left(p^{i}\right)V_{i-1}\left(np^{-\gamma}\right)\right)=
\]
\[
=-V_{\gamma-1}\left(np^{-\gamma}+bp^{-\gamma}\right)\sum_{i=\gamma}^{\infty}\left(W\left(p^{i}\right)-W\left(p^{i+1}\right)\right)V_{i}\left(np^{-\gamma}\right)=
\]
\[
=\lambda_{\gamma,n}V_{\gamma-1}\left(np^{-\gamma}+bp^{-\gamma}\right).
\]
The same result, but with opposite sign and with the change $a\rightarrow b$, is obtained in the case
$|x-np^{-\gamma}-bp^{-\gamma}|_{p}\leq p^{\gamma-1}$.
This shows that functions (\ref{g_g_n_a_b}) are eigenfunctions of  operator~(\ref{Vlad_Q_p_gen}) with eigenvalues~(\ref{lambda_gn}).

Assume that $V_{\gamma-1}\left(np^{-\gamma}+ap^{-\gamma}\right)\neq0$,
then $V_{\gamma}\left(np^{-\gamma}\right)\neq0$. Consider the function
\[
\dfrac{1}{V_{\gamma}\left(np^{-\gamma}\right)}\sum_{b=0}^{p-1}g_{\gamma,n,a,b}\left(x\right)=
\]
\[
=\varOmega\left(|x-np^{-\gamma}-ap^{-\gamma}|_{p}p^{-\gamma+1}\right)-\dfrac{V_{\gamma-1}\left(np^{-\gamma}+ap^{-\gamma}\right)}{V_{\gamma}\left(np^{-\gamma}\right)}\varOmega\left(|x-np^{-\gamma}|_{p}p^{-\gamma}\right)=f_{\gamma,n,a}\left(x\right).
\]
By the construction, the functions $f_{\gamma,n,a}\left(x\right)$ are eigenfunctions of operator~(\ref{Vlad_Q_p_gen}) with eigenvalues
(\ref{lambda_gn}). This completes the proof of Theorem~1.
\end{proof}

\begin{theorem}
Functions \eqref{f_gna}, where $\gamma\in\mathbb{Z}$,
$n\in\mathbb{Q}_{p}/Z_{p}$, $a=0,1,\ldots,p-1$ are such that
$V_{\gamma-1}\left(np^{-\gamma}+ap^{-\gamma}\right)\neq0$, form a~compete system of vectors in the space
$L^{2}\left(\mathbb{Q}_{p},m\left(x\right)d_{p}x\right)$.
\label{th2}
\end{theorem}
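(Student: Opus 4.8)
The plan is to prove completeness by showing that the closed linear span $\mathcal{L}$ of the system \eqref{f_gna} in $L^{2}\left(\mathbb{Q}_{p},m\left(x\right)d_{p}x\right)$ contains the characteristic function $\varOmega_{B}$ of every ball $B\subset\mathbb{Q}_{p}$. Since the compactly supported locally constant functions are dense in $L^{2}\left(\mathbb{Q}_{p},m\left(x\right)d_{p}x\right)$, and each such function is a finite linear combination of characteristic functions of balls, this forces $\mathcal{L}=L^{2}\left(\mathbb{Q}_{p},m\left(x\right)d_{p}x\right)$, which is the assertion. Throughout I write $\varOmega_{B_{\gamma}\left(c\right)}\left(x\right)=\varOmega\left(|x-c|_{p}p^{-\gamma}\right)$ and note that $\int_{\mathbb{Q}_{p}}\varOmega_{B_{\gamma}\left(c\right)}\left(x\right)m\left(x\right)d_{p}x=V_{\gamma}\left(c\right)$, the $m$-measure of that ball, a quantity that depends only on the ball itself (so it may be evaluated at any of its points).

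The argument rests on one elementary identity read off directly from \eqref{f_gna}: if $B'=B_{\gamma-1}\left(np^{-\gamma}+ap^{-\gamma}\right)$ is a maximal subball of $B=B_{\gamma}\left(np^{-\gamma}\right)$ with $V_{\gamma-1}\left(np^{-\gamma}+ap^{-\gamma}\right)\neq0$, then
\[
\varOmega_{B'}=f_{\gamma,n,a}+\frac{V_{\gamma-1}\left(np^{-\gamma}+ap^{-\gamma}\right)}{V_{\gamma}\left(np^{-\gamma}\right)}\,\varOmega_{B},
\]
whereas $\varOmega_{B'}=0$ in $L^{2}\left(\mathbb{Q}_{p},m\left(x\right)d_{p}x\right)$ when $V_{\gamma-1}\left(np^{-\gamma}+ap^{-\gamma}\right)=0$. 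In words: modulo the family \eqref{f_gna}, the characteristic function of a maximal subball is proportional to that of its parent ball, the coefficient being the ratio of their $m$-measures.

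Next I would fix a ball $B=B_{N}\left(z\right)$, assuming $V_{N}\left(z\right)>0$ (otherwise $\varOmega_{B}=0$ in the $L^{2}$-space and there is nothing to prove), and climb \emph{up} the chain of balls $B^{\left(k\right)}=B_{N+k}\left(z\right)$, $k\ge0$, that contain $B$. Each $B^{\left(k-1\right)}$ is a maximal subball of $B^{\left(k\right)}$ and hence equals $B_{N+k-1}\left(n_{k}p^{-\left(N+k\right)}+a_{k}p^{-\left(N+k\right)}\right)$ for suitable $n_{k},a_{k}$, with $V_{N+k-1}\left(z\right)\ge V_{N}\left(z\right)>0$, so $f_{N+k,n_{k},a_{k}}$ is a member of the family. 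Setting $c_{k}=V_{N+k-1}\left(z\right)/V_{N+k}\left(z\right)\in\left(0,1\right]$, the identity above gives $\varOmega_{B^{\left(k-1\right)}}=f_{N+k,n_{k},a_{k}}+c_{k}\,\varOmega_{B^{\left(k\right)}}$, and iterating $K$ times,
\[
\varOmega_{B}=\sum_{k=1}^{K}\Bigl(\prod_{j=1}^{k-1}c_{j}\Bigr)f_{N+k,n_{k},a_{k}}+\Bigl(\prod_{j=1}^{K}c_{j}\Bigr)\varOmega_{B^{\left(K\right)}}.
\]
The key point is that the product telescopes, $\prod_{j=1}^{K}c_{j}=V_{N}\left(z\right)/V_{N+K}\left(z\right)$, so the squared $L^{2}\left(m\,d_{p}x\right)$-norm of the remainder term equals $V_{N}\left(z\right)^{2}/V_{N+K}\left(z\right)$. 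By \eqref{restr1} one has $V_{i}\left(z\right)\to\infty$ as $i\to\infty$, hence the remainder tends to $0$; therefore the partial sums $\sum_{k=1}^{K}\bigl(\prod_{j<k}c_{j}\bigr)f_{N+k,n_{k},a_{k}}$, which lie in $\mathcal{L}$, converge in $L^{2}\left(\mathbb{Q}_{p},m\left(x\right)d_{p}x\right)$ to $\varOmega_{B}$, and $\varOmega_{B}\in\mathcal{L}$.

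The genuinely routine ingredients I am glossing over are the density of compactly supported locally constant functions in $L^{2}\left(\mathbb{Q}_{p},m\left(x\right)d_{p}x\right)$ (standard, since $m\left(x\right)d_{p}x$ is a Radon measure and the clopen balls form a countable base generating the Borel $\sigma$-algebra) and the bookkeeping that every maximal subball of every ball occurs as some $B_{\gamma-1}\left(np^{-\gamma}+ap^{-\gamma}\right)$. The one substantive, quantitative step — and the only place where hypothesis \eqref{restr}/\eqref{restr1} enters — is the estimate of the remainder in the last display: the telescoping cancellation in $\prod_{j}c_{j}$ is exactly what turns its squared norm into $V_{N}\left(z\right)^{2}/V_{N+K}\left(z\right)$, and it is the divergence $V_{i}\left(z\right)\to\infty$ forced by the hypothesis that makes it vanish. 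I expect phrasing this last step cleanly, rather than any deep difficulty, to be the main thing to be careful about.
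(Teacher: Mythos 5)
Your proposal is correct and follows essentially the same route as the paper: both expand $\varOmega_{B}$ along the ascending chain of parent balls using the one-step identity read off from \eqref{f_gna}, producing exactly the same series, and both ultimately rely on $V_{i}\rightarrow\infty$ (forced by \eqref{restr1}) to kill the tail. The only difference is cosmetic but in your favour: you bound the remainder after $K$ steps directly via the telescoping product $\prod_{j}c_{j}=V_{N}\left(z\right)/V_{N+K}\left(z\right)$, whereas the paper verifies convergence by computing $\left\Vert g_{\gamma,n,a}-\varOmega\right\Vert ^{2}=0$ through the inner-product relations \eqref{(f,f)}.
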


\begin{proof}
If the condition $V_{\gamma-1}\left(np^{-\gamma}+ap^{-\gamma}\right)=0$ is satisfied with some $\gamma\in\mathbb{Z},\: n\in\mathbb{Q}_{p}/Z_{p}$, $ a=0,1,\ldots,p-1$,
then the ball $B_{\gamma-1}\left(np^{-\gamma}+ap^{-\gamma}\right)$ is a~nullset. The set of all supports of all balls of nonzero measure $m\left(x\right)d_{p}x$
forms a~basis for $L^{2}\left(\mathbb{Q}_{p},m\left(x\right)d_{p}x\right)$. Hence, it suffices to show that the characteristic function $\varOmega\left(|x-np^{-\gamma}-ap^{-\gamma}|_{p}p^{-\gamma+1}\right)$
of any ball  $B_{\gamma-1}\left(np^{-\gamma}+ap^{-\gamma}\right)$ from  $\mathbb{Q}_{p}$  of nonzero measure,
$\gamma\in\mathbb{Z},\: n\in\mathbb{Q}_{p}/Z_{p}$, $ a=0,1,\ldots,p-1$, can be expanded in functions~(\ref{f_gna}).

Let $\left\{ x\right\} $ and $\left[x\right]$ be, respectively, the fractional and integer parts of a number $x\in\mathbb{Q}_{p}$. Setting $n_{i}=\left\{ np^{i}\right\} $,
$i=0,1,2,\ldots$, we have
\begin{equation}
n_{i}p^{-\gamma+1}=\left\{ n_{i}p\right\} p^{-\gamma}+\left[n_{i}p\right]p^{-\gamma}=n_{i+1}p^{-\gamma}+\left[n_{i}p\right]p^{-\gamma}\label{n_i_p^gamma}
\end{equation}
and hence the function
\begin{gather*}
\varOmega\left(|x-n_{i}p^{-\gamma+1}|_{p}p^{-\gamma+1}\right)-\dfrac{V_{\gamma-1}\left(np^{-\gamma}+ap^{-\gamma}\right)}{V_{\gamma}\left(np^{-\gamma}\right)}\varOmega\left(|x-n_{i+1}p^{-\gamma}|_{p}p^{-\gamma}\right)=
\\
=\varOmega\left(|x-n_{i+1}p^{-\gamma}-\left[n_{i}p\right]p^{-\gamma}|_{p}p^{-\gamma+1}\right)-\dfrac{V_{\gamma-1}\left(n_{i+1}p^{-\gamma}+\left[n_{i}p\right]p^{-\gamma}\right)}{V_{\gamma}\left(n_{i+1}p^{-\gamma}\right)}\varOmega\left(|x-n_{i+1}p^{-\gamma}|_{p}p^{-\gamma}\right)
\end{gather*}
(see~(\ref{f_gna})) belongs to the complete system of functions~(\ref{f_gna}).

Assume that $V_{\gamma-1}\left(np^{-\gamma}+ap^{-\gamma}\right)\neq0$.
Since $m\left(x\right)$ is positive definite, we have $V_{\gamma+i}\left(n_{i}p^{-\gamma-i}\right)\leq V_{\gamma+i+1}\left(n_{i+1}p^{-\gamma-i+1}\right)$
for $i=0,1,\ldots$, and so  $V_{\gamma+i}\left(n_{i}p^{-\gamma-i}\right)\neq0$.
This allows us to consider the function $g_{\gamma,n,a}\left(x\right)$:

\begin{equation}
g_{\gamma,n,a}\left(x\right)=f_{\gamma,n,a}\left(x\right)+V_{\gamma-1}\left(np^{-\gamma}+ap^{-\gamma}\right)\sum_{i=1}^{\infty}\dfrac{1}{V_{\gamma+i-1}\left(n_{i-1}p^{-\gamma-i+1}\right)}f_{\gamma+i,n_{i},\left[n_{i-1}p\right]}\left(x\right).\label{g(x)-1}
\end{equation}
Next, consider the squared $L^{2}\left(\mathbb{Q}_{p},m\left(x\right)d_{p}x\right)$-norm,
\[
\left\Vert g_{\gamma,n,a}\left(x\right)-\varOmega\left(|x-np^{-\gamma}-ap^{-\gamma}|_{p}p^{-\gamma+1}\right)\right\Vert ^{2}=\intop_{\mathbb{Q}_{p}}m\left(x\right)d_{p}xg_{\gamma,n,a}^{2}\left(x\right)-
\]
\begin{equation}
-2\intop_{\mathbb{Q}_{p}}m\left(x\right)g_{\gamma,n,a}\left(x\right)\varOmega\left(|x-np^{-\gamma}-ap^{-\gamma}|_{p}p^{-\gamma+1}\right)+\intop_{\mathbb{Q}_{p}}m\left(x\right)\varOmega\left(|x-np^{-\gamma}-ap^{-\gamma}|_{p}p^{-\gamma+1}\right)
\label{norm_g-Omega}
\end{equation}
and consider the first integral in~(\ref{norm_g-Omega}):
\[
\intop_{\mathbb{Q}_{p}}m\left(x\right)d_{p}xg_{\gamma,n,a}^{2}\left(x\right)=\intop_{\mathbb{Q}_{p}}m\left(x\right)d_{p}xf_{\gamma,n,a}^{2}\left(x\right)+
\]

\[
+2\intop_{\mathbb{Q}_{p}}m\left(x\right)d_{p}xf_{\gamma,n,a}\left(x\right)V_{\gamma-1}\left(np^{-\gamma}+ap^{-\gamma}\right)\sum_{i=1}^{\infty}\dfrac{1}{V_{\gamma+i-1}\left(n_{i-1}p^{-\gamma-i+1}\right)}f_{\gamma+i,n_{i},\left[n_{i-1}p\right]}\left(x\right)+
\]

\[
+\intop_{\mathbb{Q}_{p}}m\left(x\right)d_{p}xf_{\gamma,n,a}\left(x\right)V_{\gamma-1}\left(np^{-\gamma}+ap^{-\gamma}\right)\sum_{i=1}^{\infty}\dfrac{1}{V_{\gamma+i-1}\left(n_{i-1}p^{-\gamma-i+1}\right)}f_{\gamma+i,n_{i},\left[n_{i-1}p\right]}\left(x\right)\times
\]
\[
\times V_{\gamma-1}\left(np^{-\gamma}+ap^{-\gamma}\right)\sum_{j=1}^{\infty}\dfrac{1}{V_{\gamma+j-1}\left(n_{j-1}p^{-\gamma-j+1}\right)}f_{\gamma+j,n_{j},\left[n_{j-1}p\right]}\left(x\right).
\]
From (\ref{restr1}) it follows by (\ref{f_gna}) that each of the sums is uniformly convergent to~$x$. Hence, we may
interchange the order of summation and integration. It is easily checked that
\[
\intop_{\mathbb{Q}_{p}}m\left(x\right)d_{p}xf_{\gamma,n,a}\left(x\right)f_{\gamma',n',a'}\left(x\right)=
\]
\begin{equation}
=\delta_{\gamma,\gamma'}\delta_{n,n'}\left(\delta_{a,a'}V^{\gamma-1}\left(np^{-\gamma}+ap^{-\gamma}\right)-\dfrac{V_{\gamma-1}\left(np^{-\gamma}+ap^{-\gamma}\right)V_{\gamma-1}\left(np^{-\gamma}+a'p^{-\gamma}\right)}{V_{\gamma}\left(np^{-\gamma}\right)}\right).\label{(f,f)}
\end{equation}
Using (\ref{(f,f)}), we see that
\[
\intop_{\mathbb{Q}_{p}}m\left(x\right)d_{p}xg_{\gamma,n,a}^{2}\left(x\right)=\left(V_{\gamma-1}\left(np^{-\gamma}+ap^{-\gamma}\right)-\dfrac{V_{\gamma-1}\left(np^{-\gamma}+ap^{-\gamma}\right)^{2}}{V_{\gamma}\left(np^{-\gamma}\right)}\right)+
\]
\[
+\sum_{i=1}^{\infty}V_{\gamma-1}\left(np^{-\gamma}+ap^{-\gamma}\right)^{2}\dfrac{1}{V_{\gamma+i-1}\left(n_{i-1}p^{-\gamma-i+1}\right)^{2}}\times
\]
\[
\times\left(V_{\gamma+i-1}\left(n_{i}p^{-\gamma-i}+\left[n_{i-1}p\right]p^{-\gamma-i}\right)-\dfrac{V_{\gamma+i-1}\left(n_{i}p^{-\gamma-i}+\left[n_{i-1}p\right]p^{-\gamma-i}\right)^{2}}{V_{\gamma+i}\left(n_{i}p^{-\gamma-i}\right)}\right).
\]
Next, by (\ref{n_i_p^gamma})
\[
n_{i}p^{-\gamma-i}+\left[n_{i-1}p\right]p^{-\gamma-i}=n_{i-1}p^{-\gamma-i+1},
\]
and hence,
\[
\intop_{\mathbb{Q}_{p}}m\left(x\right)d_{p}xg_{\gamma,n,a}^{2}\left(x\right)=\left(V_{\gamma-1}\left(np^{-\gamma}+ap^{-\gamma}\right)-\dfrac{V_{\gamma-1}\left(np^{-\gamma}+ap^{-\gamma}\right)^{2}}{V_{\gamma}\left(np^{-\gamma}\right)}\right)+
\]
\[
+\sum_{i=1}^{\infty}V_{\gamma-1}\left(np^{-\gamma}+ap^{-\gamma}\right)^{2}\dfrac{1}{V_{\gamma+i-1}\left(n_{i-1}p^{-\gamma-i+1}\right)}\times
\]
\[
\times\left(1-\dfrac{V_{\gamma+i-1}\left(n_{i-1}p^{-\gamma-i+1}\right)}{V_{\gamma+i}\left(n_{i}p^{-\gamma-i}\right)}\right)=
\]
\[
=V_{\gamma-1}\left(np^{-\gamma}+ap^{-\gamma}\right)^{2}\left(\dfrac{1}{V_{\gamma-1}\left(np^{-\gamma}+ap^{-\gamma}\right)}-\dfrac{1}{V_{\gamma}\left(np^{-\gamma}\right)}\right)+
\]
\[
+\sum_{i=1}^{\infty}V_{\gamma-1}\left(np^{-\gamma}+ap^{-\gamma}\right)^{2}\left(\dfrac{1}{V_{\gamma+i-1}\left(n_{i-1}p^{-\gamma-i+1}\right)}-\dfrac{1}{V_{\gamma+i}\left(n_{i}p^{-\gamma-i}\right)}\right)=
\]
\[
=V_{\gamma-1}\left(np^{-\gamma}+ap^{-\gamma}\right)
\]
A similar analysis shows that
\[
\intop_{\mathbb{Q}_{p}}m\left(x\right)d_{p}xg_{\gamma,n,a}\left(x\right)\varOmega\left(|x-np^{-\gamma}-ap^{-\gamma}|_{p}p^{-\gamma+1}\right)=V_{\gamma-1}\left(np^{-\gamma}+ap^{-\gamma}\right),
\]
\[
\intop_{\mathbb{Q}_{p}}m\left(x\right)d_{p}x\varOmega\left(|x-np^{-\gamma}-ap^{-\gamma}|_{p}p^{-\gamma+1}\right)=V_{\gamma-1}\left(np^{-\gamma}+ap^{-\gamma}\right).
\]
As a result,
\[
\left\Vert g_{\gamma,n,a}\left(x\right)-\varOmega\left(|x-np^{-\gamma}-ap^{-\gamma}|_{p}p^{-\gamma+1}\right)\right\Vert ^{2}=V_{\gamma-1}\left(np^{-\gamma}+ap^{-\gamma}\right)\left(1-2+1\right)=0.
\]
Since the norm of this expression is zero, we have
\[
\varOmega\left(|x-np^{-\gamma}-ap^{-\gamma}|_{p}p^{-\gamma+1}\right)=
\]
\[
=f_{\gamma,n,a}\left(x\right)+V_{\gamma-1}\left(np^{-\gamma}+ap^{-\gamma}\right)\sum_{i=1}^{\infty}\dfrac{1}{V_{\gamma+i-1}\left(n_{i-1}p^{-\gamma-i+1}\right)}f_{\gamma+i,n_{i},\left[n_{i-1}p\right]}\left(x\right),
\]
completing the proof of the theorem.
\end{proof}

\begin{remark} The complete system of functions $\left\{ f_{\gamma,n,a}\left(x\right)\right\} $
is overcomplete in $L^{2}\left(\mathbb{Q}_{p},m\left(x\right)d_{p}x\right)$,
because in a~family of~$p$ functions $f_{\gamma,n,a}\left(x\right)$ with $a=0,1,\ldots,p-1$ and fixed  $\gamma$ and~$n$
only $p-1$ functions are linearly independent. This can be verified from the following relation
\[
\sum_{a=0}^{p-1}f_{\gamma,n,a}\left(x\right)=0.
\]
\label{rem1}
\end{remark}

Using Theorems \ref{th1} and~\ref{th2} one may easily find the solution of the Cauchy problem of equation (\ref{V_Eq}) with initial condition
\[
f\left(x,0\right)=\Omega\left(|x|_{p}\right).
\]
From the expansion
\[
\varOmega\left(|x|_{p}\right)=\sum_{i=0}^{\infty}\dfrac{V_{0}\left(0\right)}{V_{i}\left(0\right)}f_{i+1,0,0}\left(x\right),
\]
we easily find that
\[
f\left(x,t\right)=\exp\left(W_{m\left(x\right)}t\right)\left(\sum_{i=0}^{\infty}\dfrac{V_{0}\left(0\right)}{V_{i}\left(0\right)}f_{i+1,0,0}\left(x\right)\right)=
\]
\begin{equation}
=\sum_{i=0}^{\infty}\exp\left(-\sum_{j=i+1}^{\infty}\left(W\left(p^{j}\right)-W\left(p^{j+1}\right)\right)V^{j}\left(0\right)t\right)\dfrac{V_{0}\left(0\right)}{V_{i}\left(0\right)}f_{i+1,0,0}\left(x\right).\label{solution_m(x)}
\end{equation}

\section{Orthonormal basis of real valued functions for $L^{2}\left(\mathbb{Q}_{p},m\left(x\right)d_{p}x\right)$}

The next theorem gives the construction of a orthonormal basis of real valued functions for   $L^{2}\left(\mathbb{Q}_{p},m\left(x\right)d_{p}x\right)$.

\begin{theorem}
The functions
\begin{equation}
\varphi_{\gamma,n,b}\left(x\right)=\dfrac{1}{k_{\gamma,n}}
\dfrac{\sqrt{V_{\gamma-1}\left(np^{-\gamma}+bp^{-\gamma}\right)}}{V_{\gamma-1}
\left(np^{-\gamma}\right)}\left(f_{\gamma,n,0}\left(x\right)+\dfrac{k_{\gamma,n}V_{\gamma-1}
\left(np^{-\gamma}\right)}{V_{\gamma-1}\left(np^{-\gamma}+bp^{-\gamma}\right)}f_{\gamma,n,b}
\left(x\right)\right),
\label{basis}
\end{equation}
where $\gamma\in\mathbb{Z},\: n\in\mathbb{Q}_{p}/Z_{p}$, $ b=1,\ldots,p-1$, $V_{\gamma-1}\left(np^{-\gamma}+bp^{-\gamma}\right)\neq0$,
and $k_{\gamma,n}=-1\pm\sqrt{\dfrac{V_{\gamma}\left(np^{-\gamma}\right)}{V_{\gamma-1}\left(np^{-\gamma}\right)}}$,
form a~orthonormal basis of real valued functions for  $L^{2}\left(B_{r},m\left(x\right)d_{p}x\right)$.
\label{th3}
\end{theorem}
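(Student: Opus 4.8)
The plan is to reduce the statement to a finite linear-algebra computation inside each $(\gamma,n)$-block of the eigensystem $\{f_{\gamma,n,a}\}$ defined in~(\ref{f_gna}), and then to invoke the completeness Theorem~\ref{th2}. Fix $\gamma\in\mathbb{Z}$, $n\in\mathbb{Q}_{p}/Z_{p}$ and abbreviate $V:=V_{\gamma}(np^{-\gamma})$, $v_{a}:=V_{\gamma-1}(np^{-\gamma}+ap^{-\gamma})$ for $a=0,1,\dots,p-1$. Since the ball $B_{\gamma}(np^{-\gamma})$ is the disjoint union of its $p$ maximal subballs $B_{\gamma-1}(np^{-\gamma}+ap^{-\gamma})$, one has $V=\sum_{a=0}^{p-1}v_{a}$, hence $V_{\gamma}(np^{-\gamma})/V_{\gamma-1}(np^{-\gamma})=V/v_{0}$, so that $k_{\gamma,n}$ is precisely a root of $(k+1)^{2}=V/v_{0}$ (the hypothesis gives $v_{0}\neq0$, the index $b$ runs only over values with $v_{b}\neq0$, and for $v_{0}=V$ there is nothing to construct). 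From formula~(\ref{(f,f)}) I would first record that the Gram matrix of the $f$'s is block-diagonal in $(\gamma,n)$ and that, inside a block,
\[
\intop_{\mathbb{Q}_{p}}m(x)\,d_{p}x\,f_{\gamma,n,a}(x)f_{\gamma,n,a'}(x)=\delta_{a,a'}v_{a}-\frac{v_{a}v_{a'}}{V}.
\]

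Next I would rewrite the definition~(\ref{basis}) in the transparent form
\[
\varphi_{\gamma,n,b}(x)=\frac{\sqrt{v_{b}}}{k_{\gamma,n}v_{0}}\,f_{\gamma,n,0}(x)+\frac{1}{\sqrt{v_{b}}}\,f_{\gamma,n,b}(x),
\]
which shows at once that $\varphi_{\gamma,n,b}$ is real-valued (the $f$'s, the $V_{i}$'s and $k_{\gamma,n}$ being real) and lies in the span $H_{\gamma,n}$ of $\{f_{\gamma,n,a}:a=0,\dots,p-1\}$. Evaluating $\int m\,\varphi_{\gamma,n,b}\varphi_{\gamma,n,b'}$ with the Gram matrix above, the off-diagonal value collapses to $\frac{\sqrt{v_{b}v_{b'}}}{V}\bigl(\frac{V-v_{0}}{k_{\gamma,n}^{2}v_{0}}-\frac{2}{k_{\gamma,n}}-1\bigr)$, which vanishes for $b\neq b'$ precisely because $(k_{\gamma,n}+1)^{2}=V/v_{0}$; the same identity turns the diagonal value ($b=b'$) into $\frac{v_{b}}{V}+\frac{V-v_{b}}{V}=1$. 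Orthogonality between $\varphi_{\gamma,n,b}$ and $\varphi_{\gamma',n',b'}$ for $(\gamma,n)\neq(\gamma',n')$ is immediate from the block structure. Hence $\{\varphi_{\gamma,n,b}\}$ is an orthonormal system of real-valued functions.

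To conclude I would establish completeness by a dimension count. By Remark~\ref{rem1} (the relation $\sum_{a}f_{\gamma,n,a}=0$, together with the fact that $f_{\gamma,n,a}$ vanishes in $L^{2}$ when $v_{a}=0$) the space $H_{\gamma,n}$ has dimension at most one less than the number of nonzero subballs, i.e.\ at most the number of admissible indices $b$; since the $\varphi_{\gamma,n,b}$ are exactly that many orthonormal --- hence linearly independent --- vectors of $H_{\gamma,n}$, they form a basis of $H_{\gamma,n}$, so every $f_{\gamma,n,a}$ is a finite linear combination of them. By Theorem~\ref{th2} the system $\{f_{\gamma,n,a}\}$ is complete in $L^{2}(\mathbb{Q}_{p},m(x)d_{p}x)$, and --- retaining only the indices whose functions are supported in $B_{r}$ --- complete in $L^{2}(B_{r},m(x)d_{p}x)$; therefore $\{\varphi_{\gamma,n,b}\}$ is complete there as well, and with orthonormality this gives the asserted basis.

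The only genuinely substantive step is the orthonormality computation: the value of $k_{\gamma,n}$ has plainly been reverse-engineered so that the cross terms cancel, and the work lies wholly in the bookkeeping --- keeping the factors $v_{0},v_{b},V$ in order and tracking which pairings carry the $-v_{a}v_{a'}/V$ correction rather than the diagonal $v_{a}$. The remaining points --- discarding null subballs, taking the ``$0$''-subball to have positive measure, and restricting the index set when passing from $\mathbb{Q}_{p}$ to the ball $B_{r}$ --- are routine.
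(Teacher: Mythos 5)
Your proof is correct and follows essentially the same route as the paper's: orthonormality is verified from the Gram identity~(\ref{(f,f)}) together with the fact that $k_{\gamma,n}$ satisfies $(k_{\gamma,n}+1)^{2}=V_{\gamma}\left(np^{-\gamma}\right)/V_{\gamma-1}\left(np^{-\gamma}\right)$, and completeness is deduced from Theorem~\ref{th2} and Remark~\ref{rem1}. Your per-block dimension count merely makes explicit the completeness step that the paper asserts in one line, so the two arguments coincide.
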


\begin{proof}
It follows from Theorem~\ref{th2} and Remark~\ref{rem1}  that the system of functions (\ref{basis}) is complete. Let us prove that system (\ref{basis}) is orthonormal.
Consider the inner product $\left(\varphi_{\gamma,n,b}\left(x\right),\varphi_{\gamma',n',b'}\left(x\right)\right)$:

\[
\left(\varphi_{\gamma,n,b}\left(x\right),\varphi_{\gamma',n',b'}\left(x\right)\right)=\intop_{\mathbb{Q}_{p}}m\left(x\right)d_{p}x\varphi_{\gamma,n,b}\left(x\right)\varphi_{\gamma',n',b'}\left(x\right)=
\]

\[
=\dfrac{1}{k_{\gamma,n}k_{\gamma',n'}}\dfrac{\sqrt{V_{\gamma-1}\left(np^{-\gamma}+bp^{-\gamma}\right)V_{\gamma'-1}\left(n'p^{-\gamma'}+b'p^{-\gamma'}\right)}}{V_{\gamma-1}\left(np^{-\gamma}\right)V_{\gamma'-1}\left(n'p^{-\gamma'}\right)}\times
\]

\[
\times\intop_{\mathbb{Q}_{p}}m\left(x\right)d_{p}x\left[f_{\gamma,n,0}\left(x\right)+\dfrac{k_{\gamma,n}V_{\gamma-1}\left(np^{-\gamma}\right)}{V_{\gamma-1}\left(np^{-\gamma}+bp^{-\gamma}\right)}f_{\gamma,n,b}\left(x\right)\right]\times
\]

\[
\times\left[f_{\gamma',n',0}\left(x\right)+\dfrac{k_{\gamma',n'}V_{\gamma'-1}\left(n'p^{-\gamma'}\right)}{V_{\gamma'-1}\left(n'p^{-\gamma'}+b'p^{-\gamma'}\right)}f_{\gamma',n',b'}\left(x\right)\right]=
\]

\[
=\dfrac{1}{k_{\gamma,n}k_{\gamma',n'}}\dfrac{\sqrt{V_{\gamma-1}\left(np^{-\gamma}+bp^{-\gamma}\right)V_{\gamma'-1}\left(n'p^{-\gamma'}+b'p^{-\gamma'}\right)}}{V_{\gamma-1}\left(np^{-\gamma}\right)V_{\gamma'-1}\left(n'p^{-\gamma'}\right)}\times
\]

\[
\times\intop_{\mathbb{Q}_{p}}m\left(x\right)d_{p}x\left[f_{\gamma,n,0}\left(x\right)f_{\gamma,n,0}\left(x\right)+\dfrac{k_{\gamma,n}V_{\gamma-1}\left(np^{-\gamma}\right)}{V_{\gamma-1}\left(np^{-\gamma}+bp^{-\gamma}\right)}f_{\gamma,n,b}\left(x\right)f_{\gamma,n,0}\left(x\right)+\right.
\]

\[
+\dfrac{k_{\gamma',n'}V_{\gamma'-1}\left(n'p^{-\gamma'}\right)}{V_{\gamma'-1}\left(n'p^{-\gamma'}+b'p^{-\gamma'}\right)}f_{\gamma',n',b'}\left(x\right)f_{\gamma,n,0}\left(x\right)+
\]

\[
+\left.\dfrac{k_{\gamma,n}k_{\gamma',n'}V_{\gamma-1}\left(np^{-\gamma}\right)V_{\gamma'-1}\left(n'p^{-\gamma'}\right)}{V_{\gamma-1}\left(np^{-\gamma}+bp^{-\gamma}\right)V_{\gamma'-1}\left(n'p^{-\gamma'}+b'p^{-\gamma'}\right)}f_{\gamma,n,b}\left(x\right)f_{\gamma,n,b'}\left(x\right)\right].
\]
Using (\ref{(f,f)}) it follows that
\[
\intop_{\mathbb{Q}_{p}}m\left(x\right)d_{p}x\varphi_{\gamma,n,b}\left(x\right)\varphi_{\gamma',n',b'}\left(x\right)=
\]
\[
=\delta_{\gamma,\gamma'}\delta_{n,n'}\delta_{b,b'}+\delta_{\gamma,\gamma'}\delta_{n,n'}\dfrac{1}{k_{\gamma,n}^{2}}\dfrac{V_{\gamma-1}\left(np^{-\gamma}+bp^{-\gamma}\right)}{V_{\gamma-1}\left(np^{-\gamma}\right)^{2}}V_{\gamma-1}\left(np^{-\gamma}\right)\times
\]
\[
\times \left(1-\dfrac{V_{\gamma-1}\left(np^{-\gamma}\right)}{V_{\gamma}\left(np^{-\gamma}\right)}-\left(2k_{\gamma,n}+k_{\gamma,n}^{2}\right)\dfrac{V_{\gamma-1}\left(np^{-\gamma}\right)}{V_{\gamma}\left(np^{-\gamma}\right)}\right)=\delta_{\gamma,\gamma'}\delta_{n,n'}\delta_{b,b'},
\]
because $k_{\gamma,n}=-1\pm\sqrt{\dfrac{V_{\gamma}\left(np^{-\gamma}\right)}{V_{\gamma-1}\left(np^{-\gamma}\right)}}$
is a~root of the equation
\[
k^{2}\dfrac{V_{\gamma-1}\left(np^{-\gamma}\right)}{V_{\gamma}\left(np^{-\gamma}\right)}+2k\dfrac{V_{\gamma-1}\left(np^{-\gamma}\right)}{V_{\gamma}\left(np^{-\gamma}\right)}-\left(1-\dfrac{V_{\gamma-1}\left(np^{-\gamma}\right)}{V_{\gamma}\left(np^{-\gamma}\right)}\right)=0.
\]
This completes the proof of the theorem.
\end{proof}

\section{Conclusions}

The principal result of this paper is the method for describing stationary Markov processes on the class of ultrametric spaces~$\mathbb{U}$
isometrically embeddable into~$\mathbb{Q}_{p}$. This
method is capable of reducing  the study of such processes to the study of such processes in $\mathbb{Q}_{p}$, thereby enabling one
to employ the traditional methods of the $p$-adic mathematical physics for the study of such processes.
More specifically, we show that the problem of solving the Cauchy problem
for the Kolmogorov--Feller equation of a~stationary Markov
process on~$\mathbb{U}$ can be reduced to solving the Cauchy problem for a~pseudo-differential
equation on $\mathbb{Q}_{p}$ with the pseudo-differential operator~(\ref{W}) and with  non-translation-invariant measure.
Under this approach all the characteristics of such a~process can be expressed in terms of this solution. The method proposed above depends heavily on
the existence of an isometrical isomorphism
between~$\mathbb{U}$ and a~measurable subset
$M\subset\mathbb{Q}_{p}$. Using such a~mapping one may naturally define the measure on~$\mathbb{U}$ as the restriction to~$M$ of the Haar measure
on~$\mathbb{Q}_{p}$. Moreover, we find the spectrum of the pseudo-differential
operator~(\ref{W}) in the space $L^{2}\left(\mathbb{Q}_{p},m\left(x\right)d_{p}x\right)$
and give an explicit construction of a orthonormal basis of real valued functions for
$L^{2}\left(\mathbb{Q}_{p},m\left(x\right)d_{p}x\right)$ formed from the eigenfunctions of operator~(\ref{W}).

\end{document}